\documentclass{article}
\usepackage{ijcai26}
\usepackage[a-2b]{pdfx}
\usepackage[T1]{fontenc}
\usepackage{lmodern}
\usepackage{xurl}

\usepackage{times}
\usepackage{soul}
\usepackage{url}
\usepackage[utf8]{inputenc}
\usepackage[small]{caption}
\usepackage{graphicx}
\usepackage{amsmath}
\usepackage{amsthm}
\usepackage{booktabs}
\usepackage{algorithm}
\usepackage{algorithmic}
\usepackage[switch]{lineno}

\usepackage{braket}
\usepackage{amsfonts}
\usepackage{color}
\usepackage{amsthm}
\usepackage{gnuplottex}
\usepackage{multirow}
\newtheorem{theorem}{Theorem}
\newtheorem{lemma}{Lemma}
\newtheorem{remark}{Remark}
\newtheorem{proposition}{Proposition}

\title{Can Quantum Federated Learning Withstand Circuit-Level Backdoors?}

\author{
Aakar Mathur\and
Mohammed Ruknuddin,\And
Ashish Gupta\\
\affiliations
BITS Pilani Dubai Campus, Dubai, UAE\\
\emails
\{f20230902, f20220120, ashish\}@dubai.bits-pilani.ac.in
}

\begin{document}
\maketitle

\begin{abstract}
    Quantum Federated Learning (QFL) inherits the core vulnerability of federated optimization to malicious clients, while also introducing an attack surface from variational circuit training and measurement-driven gradients. This work proposes a novel CircUit-Level backdoor Threat (CULT) model that formalizes four stealthy attacks by exploiting quantum-aware mechanisms, including Grover, Pauli, Bit-flip, and Sign-flip. By enabling malicious clients on both in-training and post-training surfaces, these attacks can critically undermine the learning process. We establish a rigorous theoretical foundation to demonstrate attack stealthiness under standard smoothness assumptions. Experiments on the MNIST and CIFAR-10 datasets with non-IID splits and varying fractions of malicious clients show that even a single malicious client can induce severe accuracy degradation under FedAvg aggregation. While popular defenses, including Krum, Multi-Krum, FoolsGold, FLGuardian, and Mud-HoG, reduce degradation in many regimes, they fail to eliminate worst-case failure cases, where accuracy drops up to 50\%. The experimental analysis further reveals that under the CULT model, malicious updates effectively mask their presence by staying close to benign norms, thereby helping attackers evade detection. 
\end{abstract}

\section{Introduction}
\label{sec:Introduction}

\let\thefootnote\relax\footnotetext{Appendix: {https://github.com/AakarM1/CULT-CircUit-Level-backdoor-Threat}}

Federated learning (FL) enables a set of clients to jointly train a shared model while keeping raw data local~\cite{McMahan2017,Kairouz2021}. This decentralization preserves privacy but also exposes the training process to malicious participants who can inject carefully crafted triggers, resulting in targeted misclassification, commonly referred to as a backdoor attack~\cite{Bagdasaryan2020}. Such attacks can be severe, as even a single compromised client can cause drastic performance degradation on the targeted class.

Quantum Computing (QC) is another rapidly growing field that leverages the unique principles of quantum physics, such as superposition and entanglement,  to process information in novel ways, thereby solving certain tasks more efficiently than classical methods~\cite{11251187,Nielsen2000}. Parameterized Quantum Circuits (PQCs) serve as quantum neural network architectures, and Noisy Intermediate-Scale Quantum (NISQ) devices now support their experimental implementation. Recent advances demonstrate the potential of PQCs in classification and optimization, even under realistic hardware constraints~\cite{Leone2024practicalusefulness}.

Quantum federated learning (QFL) combines these paradigms by distributing hybrid quantum models training across multiple clients, each operating on their local data \cite{Ren2023,Gurung2023}. Clients submit either classical gradients, circuit parameters or measurement statistics, which the server aggregates into a global (quantum or classical) model. A recent study~\cite{11251187} presented a novel taxonomy of the QFL literature, highlighting gains in expressivity, error mitigation, challenges, and opportunities for future research. 

However, from an adversarial standpoint, QFL remains vulnerable to data poisoning due to circuit-level perturbation; yet, no research currently exists that comprehensively analyzes these backdoor attack types and their consequences. This omission raises a critical question: {\em Can QFL withstand circuit-level backdoor attack posed by a malicious client?} Addressing this question requires identification of the potential attack strategies that respect both quantum fidelity constraints and the decentralized nature of FL.

\noindent $\bullet$ {\bf Contributions.} The key contributions of this paper are:
\begin{itemize}
\item We propose a novel {\bf C}ir{\bf C}uit-{\bf L}evel backdoor {\bf T}hreat (CULT) model for QFL, where a fraction of clients inject a quantum-layer attack on local optimization steps. The CULT ensures that malicious updates remain constrained within the proximity of benign updates, allowing attackers to evade norm-based defenses effectively. 
\item Under the CULT model, this work introduces four stealthy attacks that target the \emph{internal circuitry of the PQC}, namely: (i) Grover phase-oracle attack, (ii) Pauli-rotation attack, (iii) Bit-flip attack, and (iv) Phase-kickback sign-flip attack. By exploiting both in-training and post-training surfaces, these attacks are sufficiently potent to severely degrade the global model accuracy. 
\item We provide a rigorous theoretical analysis and comprehensive empirical evaluations using the MNIST and CIFAR-10 datasets to validate the efficacy of the CULT model in both no-defense and with-defense scenarios.

\end{itemize}
The paper is organized as follows. The next section discusses related work, and Section~\ref{sec:prelim} provides preliminaries and the QFL setup. Section~\ref{sec:threat} proposes the threat model, CULT, along with four backdoor attacks, and Section~\ref{sec:theory} gives its theoretical analysis. Experimental results are reported in Section~\ref{sec:exp_evaluation} and finally, Section~\ref{sec:Conclusion and Future Scope} concludes the paper.


\section{Related Work}
\label{sec:related}

FL’s decentralized nature invites subtle poisoning attacks. Research shows that a single malicious client can scale its gradient to implant a backdoor while preserving global accuracy \cite{Bagdasaryan2020,ding2025feddlad,shen2025label}. Subsequent defenses, such as norm clipping and median aggregation, mitigate basic attacks but fail when adversaries split their poison across many rounds or use sybil clients \cite{Yin2018,Blanchard2017}. FoolsGold~\cite{fung2020limitations} adapted reputation scoring to detect sybils; however, it assumes real‐valued updates and does not address quantum‐specific constraints. FLAME (Federated Learning Adaptive Model Limitation)~\cite{nguyen2022flame} is another robust aggregation method that mitigates backdoor attacks while preserving model utility. Recently, FedDLAD~\cite{ding2025feddlad} investigated two-phase backdoor detection to improve robustness.

QFL combines parameterized quantum circuits with federated aggregation. The research by~\cite{Ren2023} introduced distributing variational circuits across clients to exploit entanglement for classification tasks. Highlighting practical challenges on NISQ hardware, the work in~\cite{Gurung2023} analyzed QFL’s error mitigation and communication overhead. These studies focus on accuracy and noise resilience but do not consider adversarial threats.
Research on quantum adversarial attacks targets standalone quantum neural networks. Analogous to classical adversarial examples, the authors~\cite{Lu2019} introduced minimal‐fidelity perturbations that cause specific misclassifications in quantum circuits.  
An optimized QFL architecture introduced by~\cite{yamany2021oqfl} to mitigate the impact of adversarial interference for intelligent transportation systems. 

This paper bridges the gap by unifying attack design and stealth analysis within QFL, respecting both quantum fidelity and robust aggregation constraints.

\section{Preliminaries and QFL Setup}
\label{sec:prelim}
This section introduces notations required to understand the proposed attacks and the QFL setup with the quantum model. 
\paragraph{Quantum states and measurements.}
Let \(\mathcal{H}\) denote a \(2^n\)-dimensional Hilbert space for an \(n\)-qubit register. A pure state \(\ket{\psi}\in\mathcal{H}\) induces a density matrix \(\varrho=\ket{\psi}\!\bra{\psi}\), while mixed states use a positive semidefinite \(\varrho\) with \(\mathrm{Tr}(\varrho)=1\). For an observable \(M\), the measurement expectation equals \(\langle M\rangle=\mathrm{Tr}(M\varrho)\).

\paragraph{Parameterized quantum circuits.}
A PQC applies a unitary \(U(x;\theta)\) composed of data-dependent encoders and trainable blocks. The model extracts measurement features \(z(x;\theta)\) by repeating circuit executions and estimating \(\langle M_j\rangle\) for a set of observables \(\{M_j\}\). A classical head \(h_\phi(\cdot)\) then produces logits and class predictions.

\paragraph{FL setting.}
Client \(k\) holds a dataset \(D_k\) and minimizes \(\ell_k(\theta)\). The server optimizes the population objective \(F(\theta):=\sum_{k=1}^K w_k \ell_k(\theta)\), with \(\sum_{k=1}^{K}w_k=1\). 
Let \(K\) clients participate in synchronous rounds \(t\in\{0,\dots,T-1\}\). Client \(k\) receives the global parameters \(\theta^t\). Each client performs \(E\) local epochs with step size \(\eta\), producing \(\theta_k^{t+1}\) and an update \(\Delta\theta_k^t := \theta_k^{t+1}-\theta^t\). The server aggregates updates using an aggregation rule ($AR$) and applies a server learning rate \(\beta\) as 
\(
\theta^{t+1} \;=\; \theta^{t} + \beta \,AR\bigl(\{\Delta\theta_k^t\}_{k=1}^K\bigr)
\).
In the experiments, $AR$ includes FedAvg~\cite{McMahanMRA16}, MUD-HoG~\cite{gupta2022long}, Krum~\cite{Blanchard2017}, MKrum~\cite{Blanchard2017}, FLGuardian~\cite{11003999}, and FoolsGold~\cite{Fung2020}.

\paragraph{Quantum model.}
Each client uses a PQC \(U(x;\theta)\) to encode an input \(x\) and produce an output via measurements of an observable \(M\). The expectation follows the standard rule \(
\langle M \rangle_{x,\theta} = \mathrm{Tr}\!\left(M\,\varrho(x;\theta)\right), \; \text{where} \;
\varrho(x;\theta) = U(x;\theta)\,\varrho_{0}\,U(x;\theta)^\dagger,
\)
with an initial state \(\rho_0\). A classical head maps measurement features to logits, and the local objective minimizes empirical loss \(\ell_k(\theta)\) with stochastic gradients estimated from repeated circuit executions.

\section{The CULT Model}
\label{sec:threat}

\begin{figure*}
    \centering
    \includegraphics[width=1\linewidth]{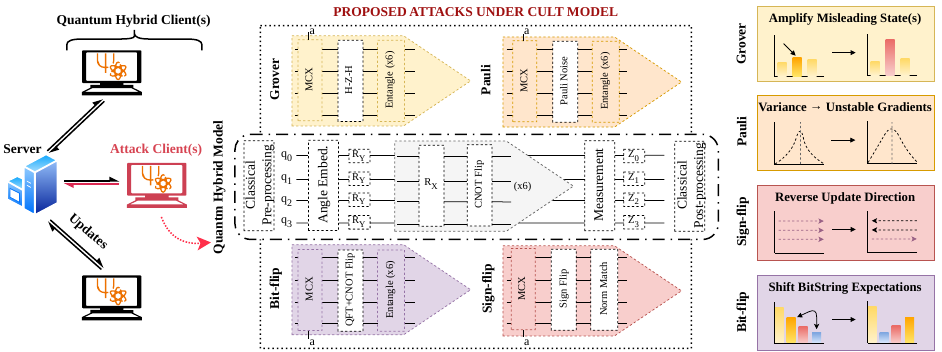}
    \caption{A simplistic view of the proposed attacks (Grover, Pauli, Bit-flip, and Sign-flip) under the CULT model. A QFL server coordinates hybrid quantum clients, where malicious client(s) can replace the benign variational quantum circuit in the Quantum Neural Network (QNN) with a novel poisoned circuit. The right-most part provides compact intuitions for the same. {\em Legend:} $Z_0...Z_3$ are Pauli-Z observables, CNOT denotes Controlled-Not gates, (x6) denotes the gate pattern being repeated 6 times, $R_X$ and $R_Y$ denote rotations about x-axis and y-axis, Multi-Controlled Pauli-X (MCX) is connected by 4 data qubits ($q_0...q_3$) and one ancilla qubit (a), QFT denotes Quantum Fourier Transform.}
    \label{fig:CULT-overview}
\end{figure*}

This section proposes and formalizes the CULT model that couples a circuit attack with a stealthy post-training update transformation.
Let \(A \subseteq \{1,\dots,K\}\) denote the set of malicious clients with \(m=|A|\) and \(q=m/K\). Each round \(t\), the server broadcasts \(\theta^t\) and receives client deltas \(\Delta\theta_k^t\). A malicious client \(a\in A\) may act on two surfaces. 
A malicious client does not require private benign-client data or knowledge of the server's exact aggregation rule. 

\paragraph{Surface S1: circuit-level attack (in-training).}
Each quantum model exposes a learnable circuit module \(U(x;\theta)\) through a callable layer \(\mathrm{quantum\_layer}\). During local training, a malicious client replaces this layer with an attack-specific circuit \(\mathrm{quantum\_layer}_{\mathrm{attack}}\) for an entire round with probability \(\rho\) (implemented as a round-level poisoning switch). Let us denote the poisoning indicator by \(b_a^t \sim \mathrm{Bernoulli}(\rho)\). The local forward pass thus uses
\[
U_a^{t}(x;\theta) \;=\;
\begin{cases}
U_{\mathrm{clean}}(x;\theta), & b_a^t = 0,\\
U_{\mathrm{attack}}(x;\theta;\pi), & b_a^t = 1,
\end{cases}
\]
where \(\pi\) collects attack parameters (trigger configuration, Pauli angles, marked states, and related constants). The attacker additionally scales the loss by a factor \(\lambda>1\) on poisoned rounds to amplify the poisoned gradient signal:
\begin{equation}
\mathcal{L}_a^{t}(\theta) \;=\;
\begin{cases}
\ell_a(\theta), & b_a^t = 0,\\
\lambda\,\ell_a(\theta), & b_a^t = 1.
\end{cases}
\label{eq:cult_loss_scale}
\end{equation}

\paragraph{Surface S2: update crafting (post-training).}
After local optimization, client \(k\) obtains a raw update \(\Delta\theta_k^t\). A malicious client transforms its raw delta into a crafted delta \(\widetilde{\Delta\theta}_a^t\) before transmission to \(
\widetilde{\Delta\theta}_a^t =\mathrm{Craft}\!\left(\Delta\theta_a^t;\, \mathcal{H}^{t}\right),
\label{eq:cult_craft}
\)
where \(\mathcal{H}^{t}\) is a running history of recent \emph{honest-like} update vectors collected locally by the attacker. The goal is to retain high accuracy while remaining close to the benign update manifold, so robust aggregators and clustering-based defenses~\cite{gupta2022long,fung2020limitations} assign non-negligible weight to \(\widetilde{\Delta\theta}_a^t\).

\subsection{Circuit-Level Attacks}
\label{sec:attacks}
The CULT model expresses the attack as a trigger-gated quantum channel \(\mathcal{B}_\pi\) composed with the clean circuit:
\begin{equation}
\rho_{\mathrm{attack}}(x;\theta) =\mathcal{B}_\pi\!\left(\rho_{\mathrm{clean}}(x;\theta)\right),
\end{equation}
\begin{equation}
\rho_{\mathrm{clean}}(x;\theta)=U_{\mathrm{clean}}(x;\theta)\rho_0U_{\mathrm{clean}}(x;\theta)^\dagger.
\label{eq:channel_form}
\end{equation}
\(\mathcal{B}_\pi\) is realized by swapping \(\mathrm{quantum\_layer}\) to a pre-constructed attack circuit \(\mathrm{quantum\_layer}_{\mathrm{attack}}\). The complete proof of all CULT attacks is detailed in Appendix A.3. 

\paragraph{(A1) Grover phase-oracle attack.}
The attack alters interference in later circuit layers, shifting the measured feature vector \(z(x;\theta)\) before the client forms \(\Delta\theta_k^t\).
Let \(\ket{\omega}\) denote a marked computational basis state encoded by a bit-string trigger. The attack applies a conditional phase flip to \(\ket{\omega}\):
\begin{equation}
O_{\omega} \;=\; I - 2\ket{\omega}\!\bra{\omega},
\qquad
U_{\mathrm{attack}} \;=\; O_{\omega}\,U_{\mathrm{clean}}.
\label{eq:grover_oracle}
\end{equation}
This mechanism can bias expectation estimates that feed the classical head.

\paragraph{(A2) Pauli-rotation attack.}
The attack applies coherent rotations to selected wires \(J\) to shift \(z(x;\theta)\) while keeping the update close to the benign geometry.
Select a qubit subset \(J\subseteq\{1,\dots,n\}\) and rotation angles \(\alpha_j\). The attack applies tensor-product Pauli rotations:
\begin{equation}
U_{\mathrm{attack}} \;=\; \left(\prod_{j\in J} e^{-i \alpha_j X_j}\right) U_{\mathrm{clean}},
\label{eq:pauli_rot}
\end{equation}
which perturbs measurement statistics through coherent rotations rather than classical additive noise. 

\paragraph{(A3) Bit-flip attack.}
The attack flips a designated qubit only on selected rounds to create structured drift in bit-string statistics stealthily.
For a designated qubit \(r\), the attack applies \(X_r\) periodically with period \(p\) on the marked state \(\ket{\omega}\):
\begin{equation}
U_{\mathrm{attack}}^{(t)} \;=\;
\begin{cases}
X_r\,U_{\mathrm{clean}}, & t \equiv 0 \; (\mathrm{mod}\; p),\\
U_{\mathrm{clean}}, & \text{otherwise}.
\end{cases}
\label{eq:bitflip}
\end{equation}
Periodic activation concentrates the disturbance into structured, low-frequency drift across rounds.

\paragraph{(A4) Phase-kickback sign-flip attack.}
For an observable \(M\) whose expectation influences the classical logit, a \(\pi\)-phase on one measured qubit flips the sign of the corresponding Pauli-\(Z\) expectation. The attack applies \(Z_s\) with a configurable phase \(\varphi\):
\begin{equation}
U_{\mathrm{attack}} \;=\; e^{-i\varphi Z_s} U_{\mathrm{clean}},
\qquad
\varphi=\pi \;\Rightarrow\; \langle Z_s\rangle \mapsto -\langle Z_s\rangle.
\label{eq:signflip}
\end{equation}
This mechanism supports systematic gradient reversal effects after backpropagation through the quantum layer.

Figure~\ref{fig:CULT-overview} illustrates a summarized and simplistic view of all proposed attacks under the CULT model.

\subsection{Update Crafting via Adaptive Intensity}
\label{sec:crafting}

The attacker records a history \(\mathcal{H}^{t}=\{h_1,\dots,h_{H}\}\) of flattened honest-like updates \(h_i\in\mathbb{R}^d\), maintained locally with a fixed window size. On a poisoning round, the attacker computes the flattened raw update \(r^t\in\mathbb{R}^d\) and selects the nearest historical reference
\(
h^\star = \arg\min_{h\in\mathcal{H}^{t}} \|r^t - h\|_2,
\qquad
u = r^t - h^\star.
\)
To avoid following dominant benign directions that clustering-based defenses~\cite{gupta2022long,fung2020limitations} learn, the attacker removes the top \(k\) principal components of the centered history. Let \(H\in\mathbb{R}^{H\times d}\) stack the history vectors and let \(V\in\mathbb{R}^{d\times d}\) denote right singular vectors of \(H-\bar{H}\), where \(\bar{H}\) is the row-wise mean. The null-space component is
\(
u_{\perp} \;=\; u - \sum_{i=1}^{k} \langle u, v_i\rangle v_i,
\)
with \(v_i\) the \(i\)-th principal direction. The CULT model then allows for choosing an \emph{adaptive intensity} \(\varepsilon^t\) using a norm-based anomaly score. Let \(\mu=\mathbb{E}_{h\sim\mathcal{H}^{t}}[\|h\|_2]\) and \(\sigma=\mathrm{Std}_{h\sim\mathcal{H}^{t}}[\|h\|_2]\). The anomaly score and intensity, respectively, follow
\begin{equation}
s^t \;=\; \frac{\bigl|\|r^t\|_2-\mu\bigr|}{\sigma},
\qquad
\varepsilon^t \;=\; \max\!\left(\varepsilon_{\min},\, \frac{\varepsilon_{\max}}{1+s^t}\right).
\label{eq:adaptive_eps}
\end{equation}
The crafted update then becomes \( 
p^t \;=\; h^\star + \varepsilon^t u_{\perp}.
\)
To match the benign norm distribution, the CULT model rescales \(p^t\) to a sampled target norm \(R^t\) drawn from the benign norm statistics and applies a camouflage of \( \widehat{p}^t \;=\; \frac{R^t}{\|p^t\|_2}\,p^t, \) to form \(
p_{\mathrm{cam}}^t \;=\; \widehat{p}^t + \xi^t,
\) 
where \(\xi^t\) is small isotropic Gaussian noise. Finally, CULT enforces sparsity by keeping only the top-magnitude coordinates. Let \(\tau\) denote the \(\kappa\)-quantile of \(|p_{\mathrm{cam}}^t|\). The transmitted update is thus
\begin{equation}
\widetilde{\Delta\theta}_a^t[i] \;=\;
\begin{cases}
p_{\mathrm{cam}}^t[i], & |p_{\mathrm{cam}}^t[i]| \ge \tau,\\
0, & \text{otherwise}.
\end{cases}
\label{eq:sparsify}
\end{equation}

\section{Theoretical Analysis}
\label{sec:theory}

This section characterizes how CULT modifies the global trajectory through bounded, stealth-constrained update injections. The results use standard smoothness assumptions~\cite{AshishECAI} from federated optimization analysis and isolate the additional bias term induced by CULT. 

\paragraph{Setup.}
Let \(F(\theta)=\sum_{k=1}^K w_k \ell_k(\theta)\) be \(L\)-smooth. In round \(t\), benign clients produce deltas \(\Delta\theta_k^t\) and malicious clients transmit crafted deltas \(\widetilde{\Delta\theta}_a^t\). Denote the benign aggregate update by
\(
g^t \;:=\; \sum_{k\notin A} w_k \Delta\theta_k^t,
\)
and the attack perturbation by
\(
b^t \;:=\; \sum_{a\in A} w_a \widetilde{\Delta\theta}_a^t.
\label{eq:attack_b}
\)
The server update thus becomes \(\theta^{t+1}=\theta^t+\beta (g^t+b^t)\).

\paragraph{Setting the stealth budget.}
The CULT explicitly shapes \(\widetilde{\Delta\theta}_a^t\) to remain close to a benign reference. Let \(\mu^t\) denote a robust center of benign deltas (for example, a coordinate-wise median or a geometric median proxy). Let us define the feasible stealth set as
\begin{equation}
\mathcal{S}(\mu^t; r^t,\kappa) \;:=\; \left\{u\in\mathbb{R}^d \,:\, \|u\|_2 \le r^t,\; \frac{\langle u,\mu^t\rangle}{\|u\|_2\|\mu^t\|_2} \ge \kappa \right\},
\label{eq:stealth_set}
\end{equation}
with radius \(r^t\) and cosine threshold \(\kappa\in[-1,1]\). The adaptive crafting to equation \eqref{eq:sparsify} implements an implicit projection toward \(\mu^t\) by norm matching and nearest-history anchoring, while suppressing the dominant principal components of benign updates.\\

\begin{lemma}[Bounded perturbation under stealth constraints]
\label{lem:bounded_b}
Assume each malicious client enforces \(\|\widetilde{\Delta\theta}_a^t\|_2 \le r^t\). Then the aggregate perturbation satisfies
\begin{equation}
\|b^t\|_2 \;\le\; \left(\sum_{a\in A} w_a\right) r^t \;\le\; q\,r^t.
\label{eq:b_bound}
\end{equation}
\end{lemma}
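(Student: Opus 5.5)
The plan is to apply the triangle inequality to the definition of \(b^t\) and then bound the total malicious weight. Since \(b^t=\sum_{a\in A} w_a \widetilde{\Delta\theta}_a^t\) is a finite linear combination, the triangle inequality together with absolute homogeneity of \(\|\cdot\|_2\) gives
\[
\|b^t\|_2 \le \sum_{a\in A} |w_a|\,\|\widetilde{\Delta\theta}_a^t\|_2 .
\]
The aggregation weights \(w_k\) are nonnegative, because they define the convex combination \(F=\sum_k w_k\ell_k\) with \(\sum_k w_k=1\). Hence \(|w_a|=w_a\). Substituting the stealth hypothesis \(\|\widetilde{\Delta\theta}_a^t\|_2\le r^t\), which holds for every \(a\in A\), and factoring out \(r^t\) yields the first inequality \(\|b^t\|_2\le \bigl(\sum_{a\in A} w_a\bigr) r^t\).

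For the second inequality I need \(\sum_{a\in A} w_a \le q = m/K\). This is where the real content lies, and it is the main obstacle. Nonnegativity and normalization alone only give \(\sum_{a\in A} w_a\le 1\), so some assumption on the weights is required. I will use the standard FedAvg setting, in which clients are weighted uniformly (\(w_k=1/K\)). In that case \(\sum_{a\in A} w_a = m/K = q\) exactly. More generally, it suffices that each malicious client carries at most the average weight, \(w_a\le 1/K\) for all \(a\in A\). This holds, for example, when malicious clients hold no more data than average under dataset-size weighting \(w_k=|D_k|/\sum_j |D_j|\). Summing over the \(m\) malicious clients then gives \(\sum_{a\in A} w_a\le m/K=q\).

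Chaining the two steps gives \(\|b^t\|_2\le \bigl(\sum_{a\in A} w_a\bigr) r^t\le q\,r^t\), which is \eqref{eq:b_bound}. In the write-up I will state the weight assumption explicitly, since the second inequality can fail without it: an attacker claiming a very large local dataset under size-proportional weighting would have \(w_a > 1/K\).
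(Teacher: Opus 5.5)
Your proof is correct and matches the paper's argument: you use the triangle inequality with the stealth bound for the first inequality, then uniform client weights $w_k=1/K$ to get $\sum_{a\in A} w_a\le m/K=q$. The paper invokes the uniform-weight assumption only in passing and takes nonnegativity of the weights for granted, so your explicit treatment of both points is a useful clarification.
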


\begin{proof}
By triangle inequality, \(\|b^t\|_2 \le \sum_{a\in A} w_a \|\widetilde{\Delta\theta}_a^t\|_2 \le (\sum_{a\in A} w_a) r^t\). Since \(\sum_{a\in A} w_a \le m/K=q\) under uniform client weights, the bound follows.
\end{proof}

\begin{proposition}[Trajectory deviation for FedAvg-style updates]
\label{prop:traj}
Assume \(F\) is \(L\)-smooth and \(\beta L < 1\). Then
\begin{equation}
\|\theta^{t+1}-\theta_{\mathrm{ben}}^{t+1}\|_2
\;\le\;
(1+\beta L)\,\|\theta^{t}-\theta_{\mathrm{ben}}^{t}\|_2
\;+\;
\beta\,\|b^t\|_2.
\label{eq:traj_rec}
\end{equation}
Combining with Lemma~\ref{lem:bounded_b} yields an additive deviation that scales with \(q r^t\).
\end{proposition}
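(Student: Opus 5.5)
The plan is to compare the attacked trajectory \(\theta^{t+1}=\theta^t+\beta(g^t+b^t)\) with a benign reference trajectory \(\theta_{\mathrm{ben}}^{t+1}=\theta_{\mathrm{ben}}^t+\beta\,g_{\mathrm{ben}}^t\). Here \(g_{\mathrm{ben}}^t\) is the benign aggregate that would be produced if the run started from \(\theta_{\mathrm{ben}}^t\). To make this tractable I will adopt the standard FedAvg-style idealization: the benign aggregate is a gradient-type map of the current iterate, \(g^t=G(\theta^t)\) and \(g_{\mathrm{ben}}^t=G(\theta_{\mathrm{ben}}^t)\), with \(G(\theta)=-\nabla F(\theta)\) for one local step. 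More generally, \(G\) is any map that is \(L\)-Lipschitz in \(\theta\). By \(L\)-smoothness of \(F\), \(\nabla F\) is \(L\)-Lipschitz, which gives \(\|G(\theta)-G(\theta')\|_2\le L\|\theta-\theta'\|_2\).

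The steps, in order, are as follows.
\begin{enumerate}
\item Subtract the two recursions:
\[
\theta^{t+1}-\theta_{\mathrm{ben}}^{t+1}=(\theta^t-\theta_{\mathrm{ben}}^t)+\beta\bigl(G(\theta^t)-G(\theta_{\mathrm{ben}}^t)\bigr)+\beta b^t.
\]
\item Apply the triangle inequality to split this into three terms.
\item Bound the middle term by \(\beta L\|\theta^t-\theta_{\mathrm{ben}}^t\|_2\) using the Lipschitz property. This yields exactly \eqref{eq:traj_rec} with factor \((1+\beta L)\).
\item Substitute Lemma~\ref{lem:bounded_b}, \(\|b^t\|_2\le q r^t\), to get \(\delta^{t+1}\le(1+\beta L)\delta^t+\beta q r^t\), where \(\delta^t:=\|\theta^t-\theta^t_{\mathrm{ben}}\|_2\).
\item Starting from \(\delta^0=0\) (shared initialization), unroll the recursion:
\[
\delta^{T}\le \beta q\sum_{t=0}^{T-1}(1+\beta L)^{T-1-t}r^t .
\]
This is the claimed additive deviation scaling with \(q r^t\). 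If \(r^t\le r\), the bound simplifies to \(q r\bigl((1+\beta L)^T-1\bigr)/L\).
\end{enumerate}
The hypothesis \(\beta L<1\) is not needed for the crude triangle-inequality bound. I will note that it keeps the per-round amplification factor below \(2\). For a convex \(F\) it would even allow a contraction via co-coercivity, but I will not rely on that.

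The main obstacle is step 3 in the multi-epoch setting, where \(E\) local epochs with step \(\eta\) mean the benign delta is not a single gradient. There I would show by induction over local steps that each local map \(\theta\mapsto\theta-\eta\nabla\ell_k(\theta)\) is \((1+\eta L_k)\)-Lipschitz. The delta map \(\theta\mapsto\theta_k^{t+1}(\theta)-\theta\) is then \(((1+\eta L)^{E}-1)\)-Lipschitz, and the weighted sum over benign clients inherits this constant. So the statement's constant \(L\) should be read as this effective Lipschitz constant of the aggregate, which reduces to \(L\) when \(E=1\) and \(\eta=1\). Stochastic gradient noise from shot estimation would be handled by coupling the two runs with the same randomness, or by stating the bound in expectation.
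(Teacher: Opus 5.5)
Your proof takes the same route as the paper's: subtract the attacked and benign recursions, use smoothness to make the benign update map Lipschitz so the previous deviation is amplified by at most $1+\beta L$, add $\beta\|b^t\|_2$ by the triangle inequality, and then invoke Lemma~\ref{lem:bounded_b}. Your version is more explicit than the paper's sketch: it keeps the term $\beta\bigl(G(\theta^t)-G(\theta_{\mathrm{ben}}^t)\bigr)$ that the paper's displayed identity drops, unrolls the recursion, and handles $E>1$ local steps. Like the paper, though, it assumes rather than derives that the benign-only aggregate is $L$-Lipschitz, which needs smoothness of the benign clients' losses (e.g., per-client $L$-smoothness), not just of $F$.
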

\begin{proof}
By definition,
\(\theta^{t+1}-\theta_{\mathrm{ben}}^{t+1}=(\theta^t-\theta_{\mathrm{ben}}^{t})+\beta b^t\).
Smoothness implies that one step of FedAvg-style local SGD produces a Lipschitz map in expectation, yielding \(\|(\theta^t-\theta_{\mathrm{ben}}^{t})\|_2\) amplification by at most \(1+\beta L\) in the recursion; adding \(\beta\|b^t\|_2\) concludes the bound.
\end{proof}

\paragraph{Sufficient conditions for accuracy degradation.}
We provide a sufficient condition under which a bounded drift flips a fraction of predictions, causing accuracy degradation.

Let $f_{\theta}(x)\in\mathbb{R}^C$ be the logit vector.
Let the predicted class be $\hat{y}(x)=\arg\max_c f_{\theta}(x)_c$ and define the clean margin
\begin{equation}
\gamma_{\theta}(x) \;:=\; f_{\theta}(x)_{\hat{y}(x)} - \max_{c\ne \hat{y}(x)} f_{\theta}(x)_c.
\end{equation}

\begin{theorem}[Sufficient condition for an accuracy drop]
\label{thm:acc_drop}
Assume $f_{\theta}(x)$ is $L_f$-Lipschitz in $\theta$ in the sense that $\|f_{\theta}(x)-f_{\theta'}(x)\|_\infty \le L_f\|\theta-\theta'\|_2$ for all $x$ in the test distribution and for all $\theta,\theta'$ in the training region.
Assume the clean model has a nontrivial mass of near-boundary points: there exists $\gamma>0$ such that $\mathbb{P}(\gamma_{\theta_{\mathrm{clean}}^{(T)}}(X)\le 2L_f\|\delta^{(T)}\|_2)\ge \varpi$ for some $\varpi \in(0,1)$.
Under this assumption, the attacked model at round $T$ incurs at least an $\varpi$ fraction of prediction flips relative to the clean model, and therefore its clean accuracy decreases by at least $\varpi$ on the subset of points whose clean predictions are correct.
\end{theorem}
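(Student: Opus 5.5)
The plan is to compare the attacked and clean trajectories through the drift \(\delta^{(T)} := \theta^{(T)} - \theta_{\mathrm{clean}}^{(T)}\). I would first control \(\|\delta^{(T)}\|_2\) by unrolling the recursion of Proposition~\ref{prop:traj} from \(\delta^{(0)}=0\), inserting Lemma~\ref{lem:bounded_b} at each round. This gives \(\|\delta^{(T)}\|_2 \le \beta q \sum_{t<T} (1+\beta L)^{T-1-t} r^t\). The bound is not needed for the flip argument itself, but it shows that the threshold \(2L_f\|\delta^{(T)}\|_2\) in the hypothesis is governed by the stealth budget \(q r^t\).

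Next I would transfer the parameter drift to the logits with the \(L_f\)-Lipschitz assumption. For every test point \(x\), each coordinate satisfies \(|f_{\theta^{(T)}}(x)_c - f_{\theta_{\mathrm{clean}}^{(T)}}(x)_c| \le L_f\|\delta^{(T)}\|_2\). Let \(\hat y\) be the clean prediction and \(c'\) a competing class. The gap \(f(x)_{\hat y} - f(x)_{c'}\) can then change by at most \(2L_f\|\delta^{(T)}\|_2\). This identifies the event \(\mathcal{N} := \{\gamma_{\theta_{\mathrm{clean}}^{(T)}}(X) \le 2L_f\|\delta^{(T)}\|_2\}\) as exactly the set of points where a prediction flip is \emph{possible}, and by hypothesis \(\mathbb{P}(\mathcal{N}) \ge \varpi\). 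The accuracy statement is then bookkeeping. On the sub-event of \(\mathcal{N}\) where the clean prediction is correct, a flip turns a correct prediction into an incorrect one. So the drop in clean accuracy, conditional on that subset, is at least the fraction of flips.

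The main obstacle is the step from ``a flip is possible on \(\mathcal{N}\)'' to ``a flip occurs on \(\mathcal{N}\)''. The Lipschitz bound gives only an upper bound on how far the margin can move. A drift could move all logits together and flip nothing, so as literally stated the margin condition is necessary but not sufficient. I would close this gap by making the implicit worst-case alignment explicit. The assumption I would add is that on \(\mathcal{N}\) the attacked drift realizes a margin decrease of at least \(\gamma_{\theta_{\mathrm{clean}}^{(T)}}(x)\): there is some \(c\neq\hat y\) with \(f_{\theta^{(T)}}(x)_{c} - f_{\theta^{(T)}}(x)_{\hat y} \ge 0\) (ties broken against \(\hat y\)). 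This is plausibly what the sign-flip and Pauli-rotation mechanisms are designed to achieve. The simplest way to state it is to reinterpret the near-boundary hypothesis with the \emph{realized} logit displacement along the \(\hat y\)-versus-\(c\) direction in place of its Lipschitz upper bound \(2L_f\|\delta^{(T)}\|_2\).

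With that alignment in place, every \(x\in\mathcal{N}\) has \(\arg\max_c f_{\theta^{(T)}}(x)_c \neq \hat y(x)\). It follows that \(\mathbb{P}(\text{flip}) \ge \mathbb{P}(\mathcal{N}) \ge \varpi\), and the accuracy conclusion follows as described above. I would state this alignment condition clearly as part of the theorem's hypotheses rather than hide it in the Lipschitz step.
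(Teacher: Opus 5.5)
Your route is the paper's route. You push the parameter drift through the $L_f$-Lipschitz bound to the logits, isolate the margin band $\mathcal{N}=\{\gamma_{\theta_{\mathrm{clean}}^{(T)}}(X)\le 2L_f\|\delta^{(T)}\|_2\}$, and invoke $\mathbb{P}(\mathcal{N})\ge\varpi$. Your unrolled bound on $\|\delta^{(T)}\|_2$ is fine but unused.

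Your main diagnosis is correct, and it is exactly where the paper's own proof breaks. The paper shows that on $\mathcal{N}$ a flip \emph{becomes possible} and then simply asserts that flips and an accuracy drop occur. The Lipschitz hypothesis only caps how far the logits move, so what it actually proves is the reverse inequality $\mathbb{P}(\text{flip})\le\mathbb{P}(\mathcal{N})$. For a concrete failure, take two classes, scalar $\theta$, and $f_\theta(x)=(\theta,0)$ for all $x$, so $L_f=1$. Set $\theta_{\mathrm{clean}}^{(T)}=0.1$ and $\theta_{\mathrm{attack}}^{(T)}=1.1$. Every point has margin $0.1\le 2=2L_f\|\delta^{(T)}\|_2$, yet no prediction changes. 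So the statement as written cannot be proved, and an extra hypothesis like your alignment condition is unavoidable. Note that with it the lower bound on flips is assumed rather than derived; the Lipschitz step only serves to confine flips to $\mathcal{N}$.

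The gap that remains in your proposal is the step you call bookkeeping. Under your alignment condition, flips occur exactly on $\mathcal{N}$. With $S:=\{\hat{y}(X)=Y\}$, the accuracy drop on $S$ is therefore $\mathbb{P}(\mathcal{N}\cap S)/\mathbb{P}(S)$. The hypothesis $\mathbb{P}(\mathcal{N})\ge\varpi$ says nothing about how $\mathcal{N}$ meets $S$, and $\mathcal{N}$ may lie entirely among misclassified points.

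For example, let $X$ be uniform on $\{a,b\}$ with $y(a)=1$ and $y(b)=2$. Let $f_\theta(x)=(m(x)-\theta,0)$ with $m(a)=10$ and $m(b)=0.01$, clean parameter $0$, and attacked parameter $0.5$. Then $\mathcal{N}=\{b\}$ has mass $1/2$ and $b$ does flip, so both the hypothesis (with $\varpi=1/2$) and your alignment condition hold. But $a$, the only correctly classified point, is untouched, and overall accuracy even rises from $1/2$ to $1$.

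To reach the stated conclusion you must put the margin condition on $S$ itself, e.g.\ $\mathbb{P}(\mathcal{N}\mid S)\ge\varpi$, or $\mathbb{P}(\mathcal{N}\cap S)\ge\varpi$ if the drop is counted as a fraction of all test points. From $\mathbb{P}(\mathcal{N})\ge\varpi$ alone you only get $\mathbb{P}(\mathcal{N}\cap S)\ge\varpi-(1-\mathrm{Acc}_{\mathrm{clean}})$. The paper's proof skips this point too.
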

\begin{proof} 
Fix any $x$ and let $\theta'=\theta_{\mathrm{clean}}^{(T)}$ and $\theta=\theta_{\mathrm{attack}}^{(T)}$.
If $\|f_{\theta}(x)-f_{\theta'}(x)\|_\infty \ge \gamma_{\theta'}(x)/2$, then the top logit under $\theta'$ can be overtaken by a competitor under $\theta$, so a prediction flip becomes possible.
Assumption mentioned in the theorem guarantees
$\|f_{\theta}(x)-f_{\theta'}(x)\|_\infty \le L_f\|\theta-\theta'\|_2 = L_f\|\delta^{(T)}\|_2$.
Thus, whenever $\gamma_{\theta'}(x)\le 2L_f\|\delta^{(T)}\|_2$, the perturbation budget is large enough to cross the decision boundary.
By assumption, this event has probability at least $\varpi$. Therefore, at least $\varpi$ fraction of points lie within a drift-sensitive margin band, and those points are precisely where the attack can convert correct predictions into incorrect ones, yielding an accuracy drop. 
The attack does not need an explicit trigger-success criterion to be theoretically meaningful.
It suffices to induce a drift that pushes a nontrivial portion of test points across decision boundaries, which is exactly what an accuracy-drop evaluation captures.
\end{proof}
\section{Experimental Evaluation}
\label{sec:exp_evaluation}

\subsection{Experimental Setting}
\label{sec:exp_setting}
Experiments evaluate robustness on MNIST~\cite{lecun1998gradient} and CIFAR-10~\cite{krizhevsky2009learning} image classification benchmarks.
Client heterogeneity is induced via a Dirichlet distribution with parameter $\alpha=0.9$ (inspired by~\cite{neves2025mingling}).
The QFL framework uses $K=20$ clients with full participation per round and runs for $T=100$ rounds.
Each client performs $E=1$ local epoch using {\em AdamW} with learning rate $10^{-3}$.
Quantum layers and optimization are implemented using PyTorch and PennyLane~\cite{PaszkeGMLBCKLGA19,abs181104968}.
The adversary set remains persistent across rounds, so the same malicious client identities appear for all $t \in \{1,\dots,T\}$. 
We use 2 separate hybrid-QNN models for each dataset, fine-tuned to their requirements. The MNIST model uses $4$ data $+$ $1$ ancilla $=$ $5$ qubit wires (entangling depth of $6$). The CIFAR-10 model uses $8$ data $+$ $1$ ancilla $=$ $9$ qubit wires. 
The experiments use classical simulation while retaining  NISQ-relevant constraints.

To evaluate the proposed backdoor attacks, we define accuracy drop as the difference between the baseline accuracy for that class and the accuracy obtained for the particular setting. Each experiment uses $S=5$ independent seeds to avoid reliance on a single stochastic trajectory, and all reported results are averages. 
The code and appendix can be found at {https://github.com/AakarM1/CULT-CircUit-Level-backdoor-Threat}.

\subsection{Defense Baselines (Aggregators)}
The evaluation compares nominal averaging with robust and backdoor-oriented aggregation.
FedAvg~\cite{McMahanMRA16} averages client updates and serves as the no-defense reference.
Krum and Multi-Krum (MKrum)~\cite{Blanchard2017} select updates using distance-to-neighbors rules to suppress Byzantine corruption in update space.
FoolsGold~\cite{Fung2020} down-weights highly similar client updates to suppress coordinated behavior.
Mud-HoG~\cite{gupta2022long} filters updates using history-of-gradients style signatures designed to highlight structured trigger-induced patterns.
FLGuardian~\cite{11003999} performs poisoning-oriented screening to provide a backdoor-specific defense reference.

\subsection{Fixing Poison Ratio $\rho$}
Attack strength is controlled through a poisoning ratio $\rho$ and the fraction of malicious clients $q$.
The poisoning ratio $\rho$ controls the intensity of poisoning per local epoch.

To find a suitable value, we perform a sensitivity sweep of varying poison ratios, and the results are reported in Figure~\ref{fig:rho_sensitivity}. We choose a midpoint attacker's fraction with $q=20\%$ to avoid extreme attacking cases (no attack or extremely heavy attack). Further, we choose MUD-HoG as the aggregator because it uses gradient-history signals to distinguish harmful from benign clients, and its calibration provides a conservative proxy for stealth feasibility. From the results, we observe that the value at $\rho=0.9$ reflects a clear trade-off: increasing $\rho$ increases the frequency of trigger-stamped samples in malicious training but also increases the statistical deviation from benign updates, thereby tightening the feasible region under server-side screening and robust aggregation. Therefore, we fix $\rho=0.9$ for all the experiments.

\begin{figure}[t]
    \centering
    \includegraphics[width=1\linewidth, height=4cm]{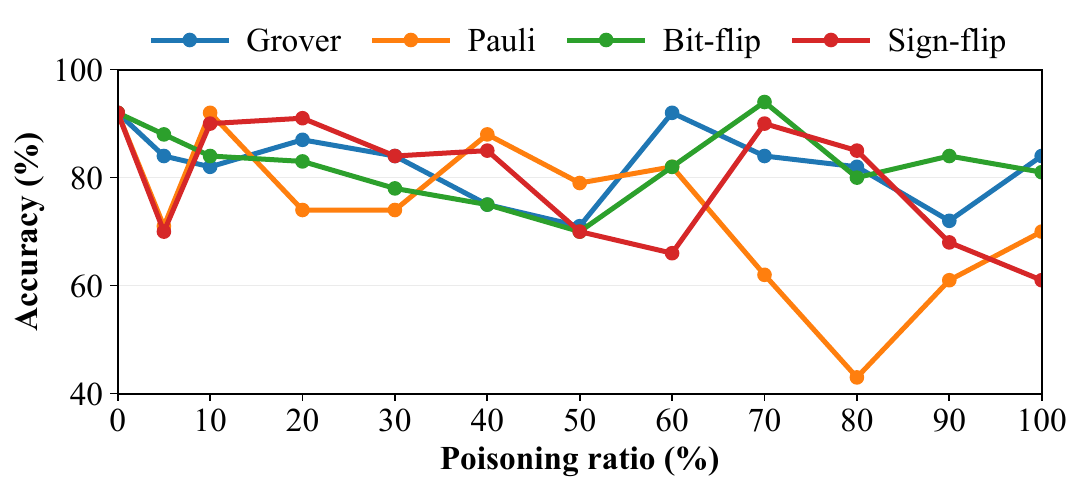}
    \caption{Accuracy at varying poisoning ratios.
    }
    \label{fig:rho_sensitivity}
\end{figure}

\subsection{Impact of Proposed Attacks  -- No Defense}
To assess the impact of the proposed attacks under the CULT model, we conduct experiments with a varying fraction of malicious clients. These experiments use FedAvg as the aggregator, as there is no defense at the server, and the results (accuracy) are reported in Table~\ref{tab:mnist_attackers}. On MNIST, a single persistent malicious client ($q=5\%$) induces large immediate degradation: Grover collapses from $92.65\%$ to $40.95\%$ ($51.70$\% decrease), Bit-flip falls to $49.20\%$ ($43.45$\% decrease), and Pauli falls to $55.59\%$ ($37.06$\% decrease).
Sign-flip is less destructive at $q=5\%$ but still reduces accuracy from $92.65\%$ to $72.51\%$ ($20.14$\% decrease).
Averaged across attacks, MNIST accuracy decreases from $92.65\%$ to $54.56\%$ at $q=5\% \ (i.e., m=1)$, demonstrating that even a single adversary can dominate the model severely.
\begin{table}[b]
    \centering
    \resizebox{.48\textwidth}{!}{
    \begin{tabular}{cccccc}
        \hline
        \textbf{Dataset} & \multicolumn{1}{c}{\begin{tabular}[c]{@{}c@{}} \textbf{Fraction of} \\ \textbf{attackers (q)} \end{tabular}} & \textbf{Grover} & \textbf{Pauli} & \textbf{Bit-flip} & \textbf{Sign-flip} \\
        \hline
        \multicolumn{1}{c}{\multirow{7}{*}{\rotatebox[origin=c]{90}{MNIST}}} & 0\%  & 92.65 & 92.65 & 92.65 & 92.65 \\
        & 5\%  & \textbf{40.95} & 55.59 & 49.20 & 72.51 \\
        & 10\% & 74.61 & \textbf{44.48} & 76.06 & 73.14 \\
         & 20\% & 73.65 & 65.51 & 78.35 & 88.43 \\
        & 30\% & 80.33 & 57.57 & 71.26 & 80.69 \\
        & 40\% & 71.33 & 78.21 & 72.86 & 73.39 \\
        & 50\% & 72.94 & 67.61 & \textbf{46.22} & \textbf{63.38} \\
        \hline
        \multicolumn{1}{c}{\multirow{7}{*}{\rotatebox[origin=c]{90}{CIFAR-10}}} & 0\%   & 70.15 & 70.15 & 70.15 & 70.15 \\
        & 5\%  & \textbf{34.87} & \textbf{47.10} & 44.01 & 58.77 \\
        & 10\% & 62.58 & 65.19 & 63.15 & 59.90 \\
        & 20\% & 58.23 & 60.49 & 65.29 & 66.44 \\
        & 30\% & 66.02 & 52.34 & 51.04 & 54.23 \\
        & 40\% & 61.47 & 69.80 & 49.60 & 52.88 \\
        & 50\% & 54.79 & 58.07 & \textbf{40.12} & \textbf{49.34} \\
        \hline
    \end{tabular}
    }
    \caption{Impact on the accuracy (\%) at varying fraction of attackers.}
    \label{tab:mnist_attackers}
\end{table}

CIFAR-10 exhibits the same qualitative vulnerability.
Starting from $70.15\%$, Grover falls to $34.87\%$ at $q=5\%$ ($35.28$\% decrease), Bit-flip falls to $44.01\%$ ($26.14$\% decrease), and Pauli falls to $47.10\%$ ($23.05$\% decrease).
Sign-flip decreases accuracy to $58.77\%$ ($11.38$\% decrease).

A notable empirical feature in both datasets is \emph{non-monotonicity} in accuracy as $q$ increases. For example, for MNIST, Grover recovers from $40.95\%$ at 5\% attackers to $74.61\%$ at 10\% and remains in the $71\%$ to $80\%$ band through 50\% attackers. For CIFAR-10, Pauli similarly rises from $47.10\%$ at 5\% to $65.19\%$ at 10\%. This behavior is consistent with the stochasticity of non-IID splits. 
This non-monotonic trend arises due to (i) Dirichlet split \(\alpha=0.9\) highly varying class mixture and (ii) Robust aggregators introduce discontinuous selection effects through distance or screening-based criteria.

\begin{remark}
Attack presence perturbs the model trajectory, but the resulting generalization error need not scale monotonically with \(q\). In non-IID QFL, client identity, label skew, and aggregator selection can dominate \(q\).
\end{remark}

\subsection{Impact of Proposed Attacks with Defense}

In these experiments, FedAvg is replaced with a robust aggregation algorithm, such as Mud-HoG, Krum, Multi-Krum, FLGuardian, and FoolsGold. Figure~\ref{fig:mnist_acc_sweep} and Figure~\ref{fig:cifar_acc_sweep} show accuracy across attacker fractions, while Figure~\ref{fig:mnist-drop-heatmap} and Figure~\ref{fig:cifar-drop-heatmap} present the corresponding accuracy-drop heatmaps (computed against each aggregator’s $0\%$ attacker baseline). 

\begin{figure}[t]
    \centering
    \includegraphics[width=1\linewidth]{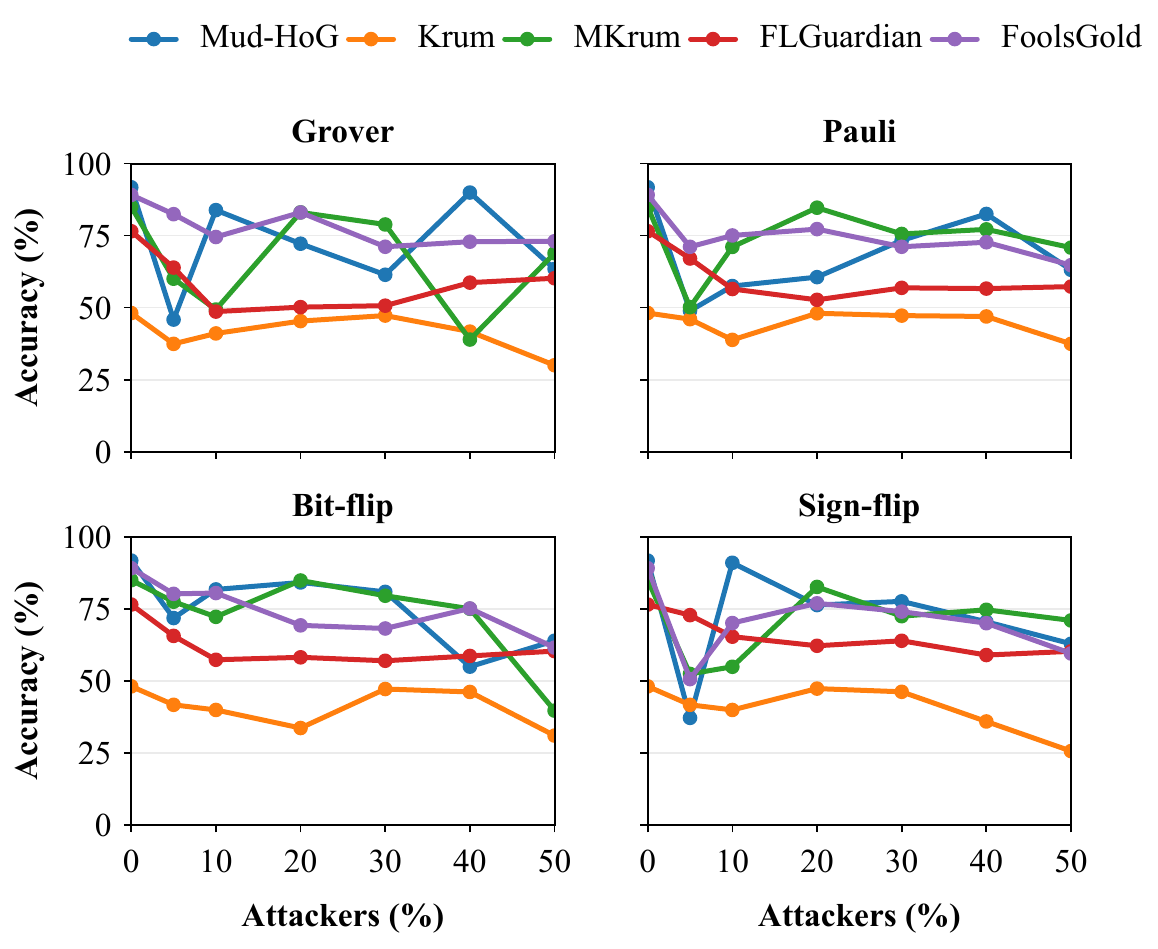}
    \caption{Attacks performance on MNIST against defenses.}
    \label{fig:mnist_acc_sweep}
\end{figure}

\begin{figure}[t]
    \includegraphics[width=1\linewidth]{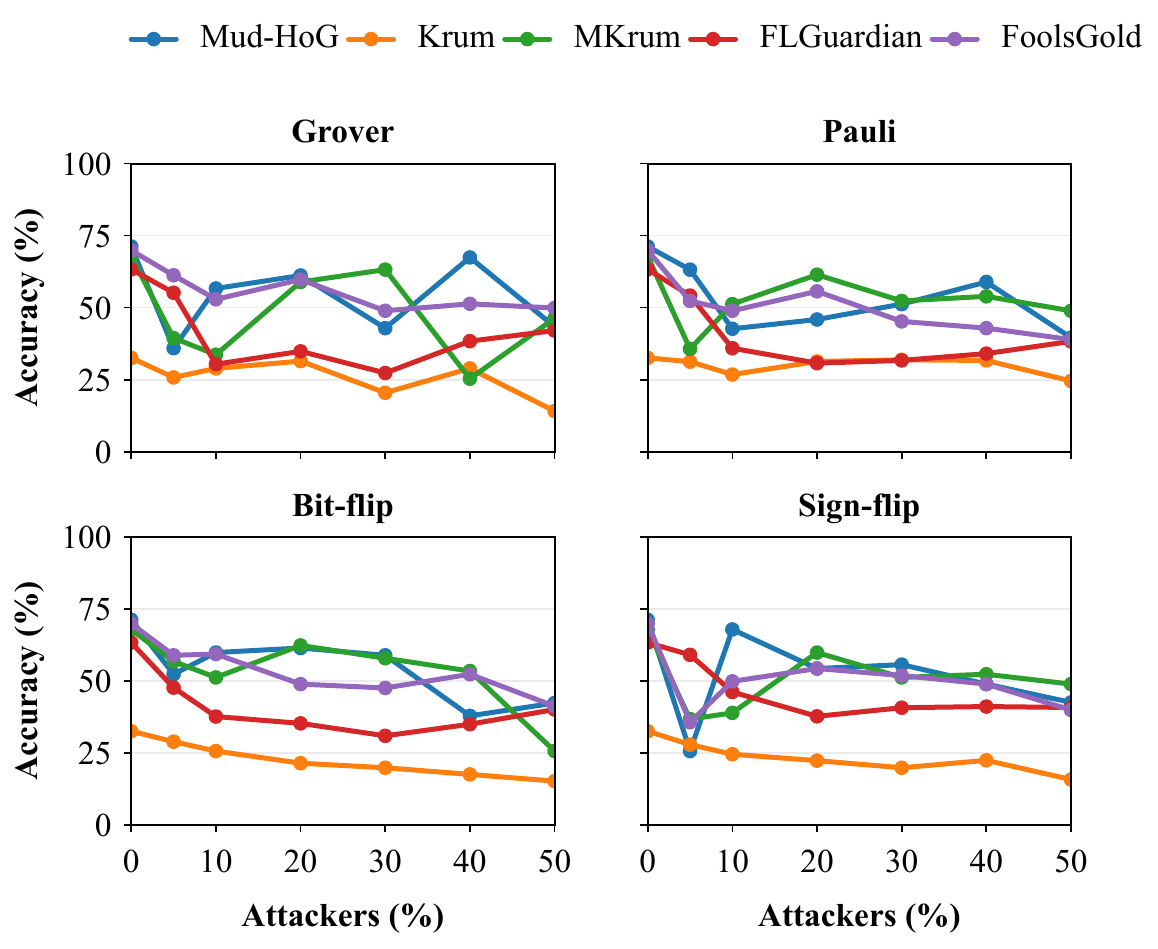}
    \caption{Attacks performance on CIFAR-10 against defenses.}
    \label{fig:cifar_acc_sweep}
\end{figure}

\begin{figure}[h]
    \centering
    \includegraphics[width=1\linewidth]{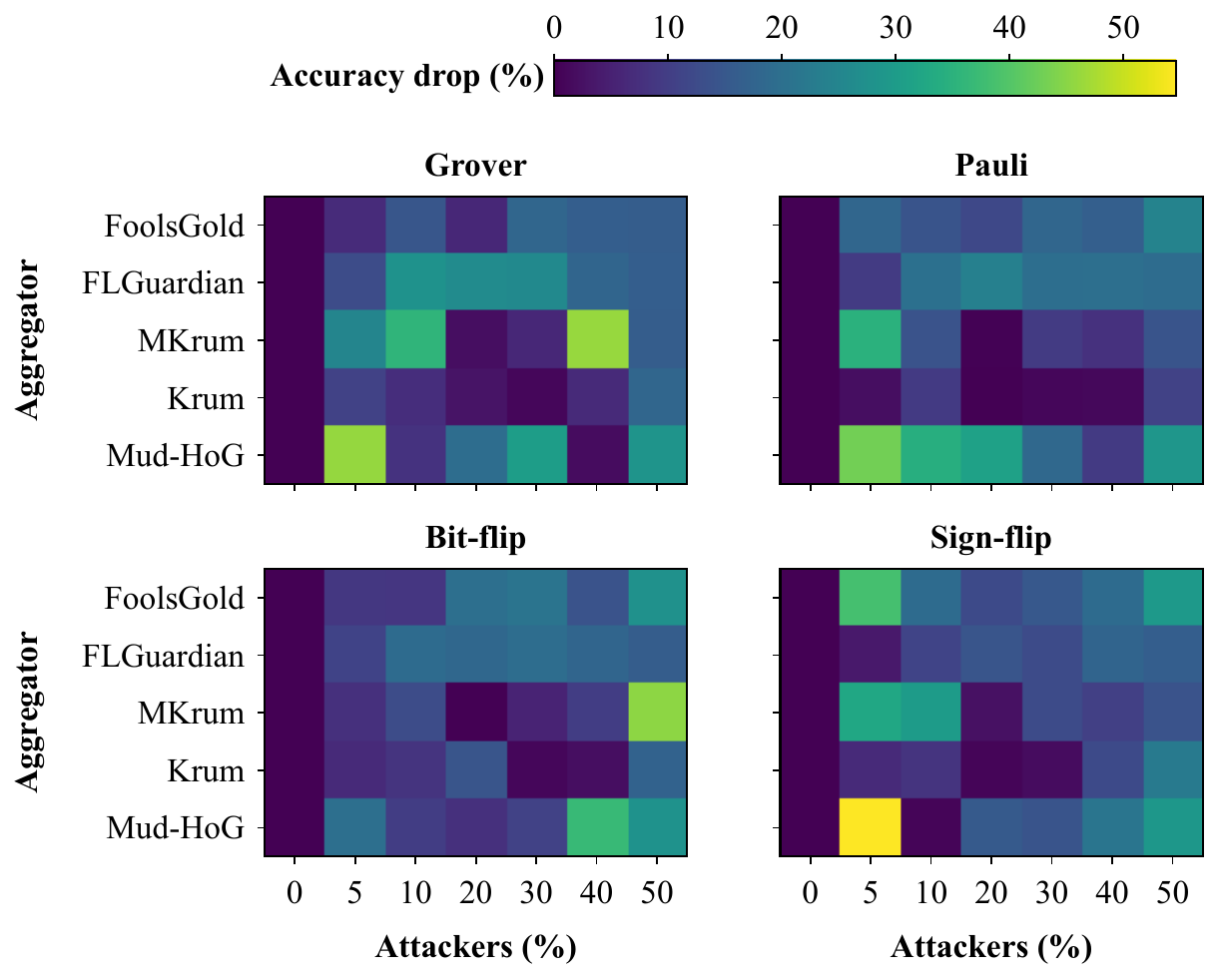}
    \caption{Accuracy drop heatmap (with $q=0\%$) for MNIST.}
    \label{fig:mnist-drop-heatmap}
\end{figure}
\begin{figure}[h]
    \includegraphics[width=1\linewidth]{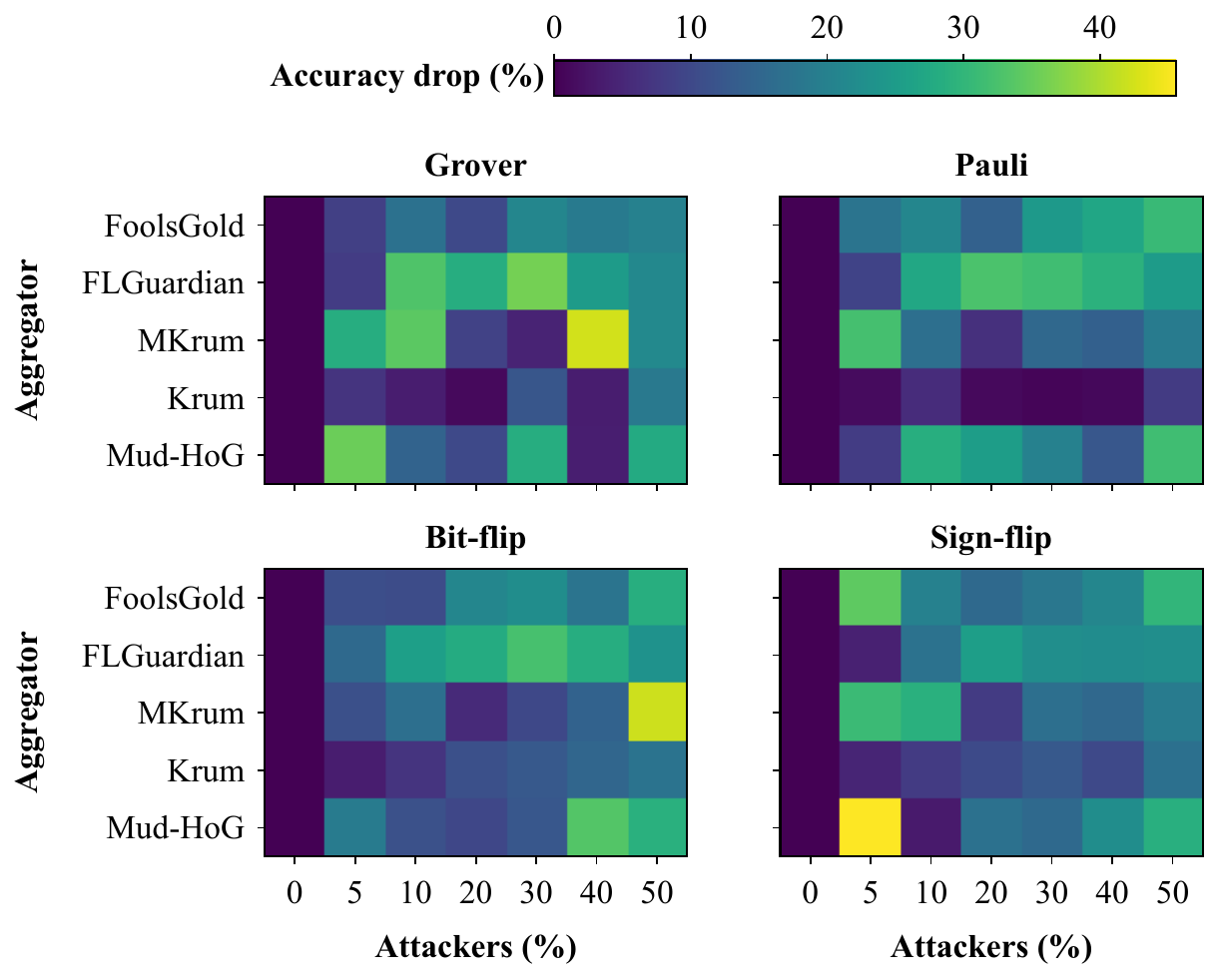}
    \caption{Accuracy drop heatmap (with $q=0\%$) for CIFAR-10.}
    \label{fig:cifar-drop-heatmap}
\end{figure}

Two systematic trends emerge. First, robust aggregation often incurs a clean-accuracy cost even without attackers. On MNIST at $0\%$ attackers, Krum achieves only $48.13\%$, compared with $91.79\%$ for Mud-HoG and $85.00\%$ for MKrum; on CIFAR-10, Krum falls further to $32.56\%$ at $0\%$ attackers, whereas Mud-HoG reaches $71.23\%$ and FoolsGold reaches $69.87\%$. This establishes a critical baseline: some defenses may appear “stable” under attack simply because they already compress performance in the benign regime, which shifts interpretation from robustness to underfitting.

Second, among defenses that preserve benign accuracy, Multi-Krum and Mud-HoG exhibit the strongest average resilience, but none eliminate worst-case collapses. Aggregating over all nonzero attacker fractions and attacks, MKrum attains the highest mean attacked accuracy on MNIST ($69.83\%$), narrowly ahead of Mud-HoG ($69.05\%$) and FoolsGold ($67.01\%$). On CIFAR-10, Mud-HoG leads at $50.70\%$, with FoolsGold ($49.87\%$) and MKrum ($48.58\%$) close behind; FLGuardian is substantially lower at $39.39\%$. These averages, however, conceal sharp failure modes visible in the heatmaps for both datasets.

\begin{remark}
Even the strongest defenses can suffer severe degradation at specific attacks, implying that the threat cannot be dismissed as “handled” by choosing a robust aggregator alone.
\end{remark}

The sweeps further reveal that defenses interact differently with each attack family. At 20\% attackers on MNIST, MKrum remains comparatively high across all attacks, achieving $83.07\%$ (Grover), $75.59\%$ (Pauli), $79.59\%$ (Bit-flip), and $63.95\%$ (Sign-flip). In contrast, FLGuardian at the same 20\% attackers yields $50.69\%$ (Grover) and $57.01\%$ (Sign-flip), highlighting that screening-based defenses can underperform when adversarial updates retain sufficient alignment with benign update geometry. On CIFAR-10, at 20\% attackers, Mud-HoG achieves the best Grover accuracy ($61.16\%$) and remains competitive under Bit-flip ($55.67\%$).
But, Sign-flip remains challenging across defenses, with Mud-HoG at $77.61\%$ at 20\% attackers yet showing a dramatic vulnerability at 5\% attackers.

Heatmaps, shown in Figures~\ref{fig:mnist-drop-heatmap} and~\ref{fig:cifar-drop-heatmap}, make cross-defense failure modes explicit by exposing whether degradation concentrates in a small subset of aggregators or persists across families.
The concentration of high drop values at low-to-moderate $q$ indicates that the dominant failure mechanism is not purely scaling with attacker prevalence.
Instead, the pattern supports an interpretation in which poisoned updates maintain plausible norms and still improve local loss.

\subsection{Average Variance across Defenses}
Figure~\ref{fig:robustness-index} summarizes defense behavior through a robustness-style comparison between each aggregator’s benign baseline (0\% attackers) and its attacked operating regime (mean across $q>0$). On MNIST, Mud-HoG drops from $91.79\%$ (no attack) to $69.05\%$ (with attack), a $22.74$ pp decrease, and FoolsGold drops from $89.15\%$ to $67.01\%$ ($22.14$ pp). FLGuardian exhibits greater degradation, from $76.53\%$ to $55.22\%$ ($21.31$ pp), while Krum remains low in both regimes (from $48.13\%$ to $46.81\%$), reflecting a defense that primarily reduces capacity rather than improving robustness.

\begin{figure}[h]
    \centering
    \includegraphics[width=1\linewidth]{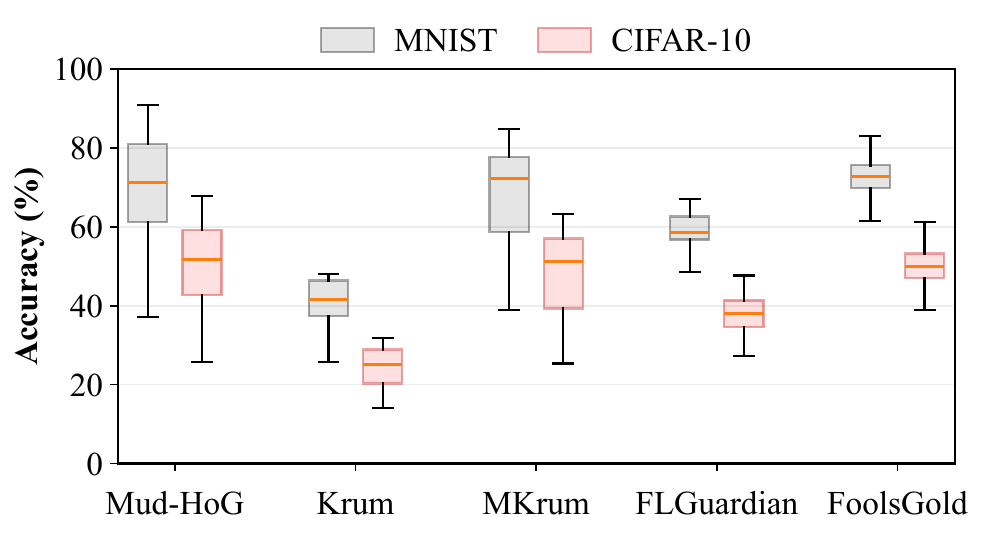}
    \caption{Accuracy variance across the defenses, demonstrating the range of accuracy achieved, summarizing the impact of all attacks.}
    \label{fig:robustness-index}
\end{figure}

On CIFAR-10, the benign-to-attacked gap is similarly large for the high-accuracy defenses: Mud-HoG decreases from $71.23\%$ to $50.70\%$ ($20.53$ pp), and MKrum from $67.82\%$ to $48.58\%$ ($19.24$ pp). FLGuardian drops from $63.31\%$ to $39.39\%$ ($23.92$ \%). These quantified gaps show that robust aggregation improves resilience relative to naive training in many cases, yet persistent quantum attack mechanisms can still impose double-digit average accuracy loss.

\subsection{Why the Attacks Remain Stealthy in Practice}
\label{sec:stealth_discussion}
The empirical results indicate that the proposed attacks can remain difficult to detect under operational monitoring that relies on accuracy trends or simple anomaly thresholds. The following three points support this claim.

First, dilution by aggregation limits the per-round footprint. With 5\% attackers setting, a malicious update contributes a marginal fraction to the aggregated step, and even under robust rules, the adversary can shape the update direction without producing extreme norms. This implies that a persistent attacker can accumulate influence gradually. 

Second, the quantum model’s bounded measurement outputs and the stochasticity induced by non-IID splits increase ambiguity. 
Under Dirichlet heterogeneity ($\alpha=0.9$), benign client updates already differ substantially across rounds, which raises the “background noise floor” that a defender must exceed to reliably detect an attacker. The observed non-monotonic accuracy patterns, in the no-defense case, provide direct evidence that naive heuristics such as “accuracy must degrade monotonically with $q$” are invalid.

Third, the current robust techniques target generic update outliers rather than attack-specific quantum structures. 
The heatmaps show sharp attack-specific pockets of failure, such as worst-case drops above $55$\%. Such pockets are particularly problematic in practice because they can appear as isolated “training instabilities” rather than a persistent adversary, especially when monitoring relies on aggregate accuracy.


\section{Conclusion and Future Scope}
\label{sec:Conclusion and Future Scope}
We proposed a threat model, CULT, that introduces four stealthy circuit-level backdoor attacks targeting both in-training and post-training surfaces of QFL. Along with a strong theoretical analysis, we extensively evaluated (with varying fractions of malicious clients) the proposed attacks on benchmark datasets to assess their impact. 
The results demonstrated the potency of CULT attacks; even with only 5\% malicious client presence, model accuracy degraded by up to 50\% in both with-defense and no-defense cases.

Future work should therefore prioritize defenses that couple update-geometry robustness with {\em quantum-aware} signals. Concretely, integrating circuit-level consistency checks, temporal stability constraints on measurement distributions, and per-client trajectory diagnostics into the aggregation loop may reduce the stealth surface that arises from bounded quantum measurements and heterogeneous client data.

\bibliographystyle{named}
\bibliography{ijcai26}

@inproceedings{McMahan2017,
  title     = {Communication-Efficient Learning of Deep Networks from Decentralized Data},
  author    = {McMahan, H. Brendan and Moore, Eider and Ramage, Daniel and Hampson, Seth and Aguera y Arcas, Blaise},
  booktitle = {Proceedings of the 20th International Conference on Artificial Intelligence and Statistics},
  year      = {2017},
  pages     = {1273--1282}
}

@article{Kairouz2021,
  title     = {Advances and Open Problems in Federated Learning},
  author    = {Kairouz, Peter and McMahan, H. Brendan and Avent, Brendan and Bellet, Aur\'elien and Bennis, Mehdi and Bhagoji, Arjun and Bonawitz, Keith and Charles, Zachary and Cummings, Tyler and Geyer, Robin and others},
  journal   = {Foundations and Trends\textregistered\ in Machine Learning},
  volume    = {14},
  number    = {1--2},
  pages     = {1--210},
  year      = {2021},
  publisher = {Now Publishers}
}

@book{Nielsen2000,
  title     = {Quantum Computation and Quantum Information},
  author    = {Nielsen, Michael A. and Chuang, Isaac L.},
  publisher = {Cambridge University Press},
  year      = {2000},
  isbn      = {978-0521635035}
}

@article{Ren2023,
  title   = {Towards Quantum Federated Learning},
  author  = {Ren, Chao and Yan, Rudai and Zhu, Huihui and Yu, Han and Xu, Minrui and Shen, Yuan and Xu, Yan and Xiao, Ming and Dong, Zhao Yang and Skoglund, Mikael and Niyato, Dusit and Kwek, Leong Chuan},
  journal = {arXiv preprint arXiv:2306.09912},
  year    = {2023}
}

@article{Gurung2023,
  title   = {Quantum Federated Learning: Analysis, Design and Implementation Challenges},
  author  = {Gurung, Dev and Pokhrel, Shiva Raj and Li, Gang},
  journal = {arXiv preprint arXiv:2306.15708},
  year    = {2023}
}

@article{Lu2019,
  title   = {Quantum Adversarial Machine Learning},
  author  = {Lu, Sirui and Duan, Lu-Ming and Deng, Dong-Ling},
  journal = {arXiv preprint arXiv:2001.00030},
  year    = {2019}
}

@inproceedings{Blanchard2017,
  title     = {Machine Learning with Adversaries: Byzantine Tolerant Gradient Descent},
  author    = {Blanchard, Peva and El Mhamdi, El Mahdi and Guerraoui, Rachid and Stainer, Julien},
  booktitle = {Advances in Neural Information Processing Systems 30},
  editor    = {Guyon, I. and Von Luxburg, U. and Bengio, S. and Wallach, H. and Fergus, R. and Vishwanathan, S. and Garnett, R.},
  pages     = {119-129},
  year      = {2017},
  publisher = {Curran Associates, Inc.},
  address   = {Red Hook, NY, USA},
  url       = {https://papers.nips.cc/paper_files/paper/2017/hash/f4b9ec30ad9f68f89b29639786cb62ef-Abstract.html}
}

@inproceedings{Fung2020,
  title     = {Mitigating Sybils in Federated Learning Poisoning},
  author    = {Fung, Clement and Yoon, Chris and Beschastnikh, Ivan},
  booktitle = {Proceedings of the 23rd ACM SIGSAC Conference on Computer and Communications Security},
  year      = {2020},
  pages     = {2946--2961}
}

@inproceedings{Yin2018,
  title     = {Byzantine-Robust Distributed Learning: Towards Optimal Statistical Rates},
  author    = {Yin, Dong and Chen, Ying and Ramchandran, Kannan and Bartlett, Peter L.},
  booktitle = {Proceedings of the 35th International Conference on Machine Learning},
  year      = {2018},
  pages     = {5650--5659}
}

@inproceedings{Bagdasaryan2020,
  title     = {How To Backdoor Federated Learning},
  author    = {Bagdasaryan, Eugene and Shan, Andreas and Veit, Andreas and Hua, Yiqing and Papernot, Nicolas},
  booktitle = {Proceedings of the 23rd International Conference on Artificial Intelligence and Statistics},
  year      = {2020},
  pages     = {2938--2948}
}

@ARTICLE{11003999,
  author={Zhou, Xingjie and Chen, Xianzhang and Liu, Shukan and Fan, Xuehong and Sun, Qiao and Chen, Lin and Qiu, Meikang and Xiang, Tao},
  journal={IEEE Transactions on Information Forensics and Security}, 
  title={FLGuardian: Defending Against Model Poisoning Attacks via Fine-Grained Detection in Federated Learning}, 
  year={2025},
  volume={20},
  number={},
  pages={5396-5410},
  doi={10.1109/TIFS.2025.3570119}}

@inproceedings{gupta2022long,
  title={Long-Short History of Gradients Is All You Need: Detecting Malicious and Unreliable Clients in Federated Learning},
  author={Gupta, Ashish and Luo, Tie and Ngo, Mao V and Das, Sajal K},
  booktitle={European Symposium on Research in Computer Security},
  pages={445--465},
  year={2022},
  organization={Springer}
}

@article{lecun1998gradient,
  title   = {Gradient-Based Learning Applied to Document Recognition},
  author  = {LeCun, Yann and Bottou, L{\'e}on and Bengio, Yoshua and Haffner, Patrick},
  journal = {Proceedings of the IEEE},
  volume  = {86},
  number  = {11},
  pages   = {2278--2324},
  year    = {1998},
  doi     = {10.1109/5.726791}
}

@article{krizhevsky2009learning,
  title       = {Learning Multiple Layers of Features from Tiny Images},
  author      = {Krizhevsky, Alex},
  institution = {University of Toronto},
  year        = {2009},
  month       = apr,
  url         = {https://www.cs.toronto.edu/~kriz/learning-features-2009-TR.pdf}
}

@article{11251187,
  author={Mathur, Aakar and Gupta, Ashish and Das, Sajal K.},
  journal={IEEE Communications Surveys \& Tutorials}, 
  title={When Federated Learning Meets Quantum Computing: Survey and Research Opportunities}, 
  year={2025},
  volume={},
  number={},
  pages={1-1},
  doi={10.1109/COMST.2025.3634143}}

@inproceedings{shen2025label,
  title={Label-free backdoor attacks in vertical federated learning},
  author={Shen, Wei and Huang, Wenke and Wan, Guancheng and Ye, Mang},
  booktitle={Proceedings of the AAAI Conference on Artificial Intelligence},
  volume={39},
  number={19},
  pages={20389--20397},
  year={2025}
}

@inproceedings{ding2025feddlad,
  title={FedDLAD: A Federated Learning Dual-Layer Anomaly Detection Framework for Enhancing Resilience Against Backdoor Attacks},
  author={Ding, Binbin and Yang, Penghui and Huang, Sheng-Jun},
  booktitle={Proceedings of the Thirty-Fourth International Joint Conference on Artificial Intelligence, IJCAI-25},
  pages={5021--5029},
  year={2025}
}

@inproceedings{fung2020limitations,
  title={The limitations of federated learning in sybil settings},
  author={Fung, Clement and Yoon, Chris JM and Beschastnikh, Ivan},
  booktitle={23rd International Symposium on Research in Attacks, Intrusions and Defenses},
  pages={301--316},
  year={2020}
}

@inproceedings{nguyen2022flame,
  title={$\{$FLAME$\}$: Taming backdoors in federated learning},
  author={Nguyen, Thien Duc and Rieger, Phillip and Chen, Huili and Yalame, Hossein and M{\"o}llering, Helen and Fereidooni, Hossein and Marchal, Samuel and Miettinen, Markus and Mirhoseini, Azalia and Zeitouni, Shaza and others},
  booktitle={31st USENIX Security Symposium (USENIX Security 22)},
  pages={1415--1432},
  year={2022}
}

@article{yamany2021oqfl,
  title={OQFL: An optimized quantum-based federated learning framework for defending against adversarial attacks in intelligent transportation systems},
  author={Yamany, Waleed and Moustafa, Nour and Turnbull, Benjamin},
  journal={IEEE Transactions on Intelligent Transportation Systems},
  volume={24},
  number={1},
  pages={893--903},
  year={2021},
  publisher={IEEE}
}

@article{McMahanMRA16,
  author       = {H. Brendan McMahan and
                  Eider Moore and
                  Daniel Ramage and
                  Blaise Ag{\"{u}}era y Arcas},
  title        = {Federated Learning of Deep Networks using Model Averaging},
  journal      = {CoRR},
  volume       = {abs/1602.05629},
  year         = {2016},
  url          = {http://arxiv.org/abs/1602.05629},
  eprinttype    = {arXiv},
  eprint       = {1602.05629},
  bibsource    = {dblp computer science bibliography, https://dblp.org}
}

@article{AshishECAI,
author = {Augello, Andrea and Gupta, Ashish and Lo Re, Giuseppe and Das, Sajal},
year = {2024},
month = {07},
pages = {},
title = {Tackling Selfish Clients in Federated Learning},
doi = {10.48550/arXiv.2407.15402}
}

@misc{neves2025mingling,
  title         = {Mingling with the Good to Backdoor Federated Learning},
  author        = {Neves, Nuno},
  year          = {2025},
  eprint        = {2501.01913},
  archivePrefix = {arXiv},
  primaryClass  = {cs.CR},
  doi           = {10.48550/arXiv.2501.01913},
  url           = {https://arxiv.org/abs/2501.01913}
}

@inproceedings{PaszkeGMLBCKLGA19,
  author       = {Adam Paszke and
                  Sam Gross and
                  Francisco Massa and
                  Adam Lerer and
                  James Bradbury and
                  Gregory Chanan and
                  Trevor Killeen and
                  Zeming Lin and
                  Natalia Gimelshein and
                  Luca Antiga and
                  Alban Desmaison and
                  Andreas K{\"{o}}pf and
                  Edward Z. Yang and
                  Zachary DeVito and
                  Martin Raison and
                  Alykhan Tejani and
                  Sasank Chilamkurthy and
                  Benoit Steiner and
                  Lu Fang and
                  Junjie Bai and
                  Soumith Chintala},
  editor       = {Hanna M. Wallach and
                  Hugo Larochelle and
                  Alina Beygelzimer and
                  Florence d'Alch{\'{e}}{-}Buc and
                  Emily B. Fox and
                  Roman Garnett},
  title        = {PyTorch: An Imperative Style, High-Performance Deep Learning Library},
  booktitle    = {Advances in Neural Information Processing Systems 32: Annual Conference
                  on Neural Information Processing Systems 2019, NeurIPS 2019, December
                  8-14, 2019, Vancouver, BC, Canada},
  pages        = {8024--8035},
  year         = {2019},
  url          = {https://proceedings.neurips.cc/paper/2019/hash/bdbca288fee7f92f2bfa9f7012727740-Abstract.html},
  bibsource    = {dblp computer science bibliography, https://dblp.org}
}

@article{abs181104968,
  author       = {Ville Bergholm and
                  Josh A. Izaac and
                  Maria Schuld and
                  Christian Gogolin and
                  Nathan Killoran},
  title        = {PennyLane: Automatic differentiation of hybrid quantum-classical computations},
  journal      = {CoRR},
  volume       = {abs/1811.04968},
  year         = {2018},
  url          = {http://arxiv.org/abs/1811.04968},
  eprinttype   = {arXiv},
  eprint       = {1811.04968},
  bibsource    = {dblp computer science bibliography, https://dblp.org}
}

@article{Leone2024practicalusefulness,
  doi = {10.22331/q-2024-07-03-1395},
  url = {https://doi.org/10.22331/q-2024-07-03-1395},
  title = {On the practical usefulness of the {H}ardware {E}fficient {A}nsatz},
  author = {Leone, Lorenzo and Oliviero, Salvatore F.E. and Cincio, Lukasz and Cerezo, M.},
  journal = {{Quantum}},
  issn = {2521-327X},
  publisher = {{Verein zur F{\"{o}}rderung des Open Access Publizierens in den Quantenwissenschaften}},
  volume = {8},
  pages = {1395},
  month = jul,
  year = {2024}
}
\end{document}